\newtheorem{theorem}{Theorem}
\newcommand{\EE}{{\mathbf{E}}}
\newcommand{\PP}{{\mathbf{P}}}
\newcommand{\RR}{{\mathbb{R}}}
\newtheorem{definition}{Definition}
\begin{document}
\title{Community Detection in the \\
Labelled Stochastic Block Model}
\author{Simon Heimlicher\thanks{part of this author's work has been conducted while employed at Technicolor.}\\ University of Massachusetts\\ Amherst, USA\\  \and Marc Lelarge\\ INRIA-ENS\\ Paris,
  France \and Laurent Massouli\'e${}^{\small *}$\\INRIA-Microsoft Research Joint Centre\\ Palaiseau, France }
\date{}
\maketitle
\begin{abstract}
We consider the problem of community detection from observed
interactions between individuals, in the context where multiple types
of interaction are possible. We use {\em labelled} stochastic block
models to represent the observed data, where labels correspond to
interaction types. Focusing on a two-community scenario, we conjecture
a threshold for the problem of reconstructing the hidden
communities in a way that is correlated with the true partition.
To substantiate the conjecture, we prove that the given threshold correctly identifies a transition on the behaviour of belief propagation from insensitive to sensitive. We further prove that the same threshold corresponds to the transition in a related inference problem on a tree model from infeasible to feasible. Finally, numerical results using belief propagation for community detection give further support to the conjecture.
\end{abstract}

\section{Introduction}%
\label{sec:intro}

Community detection consists in the identification of underlying clusters of individuals with similar properties in an overall population. It is relevant in vastly diverse contexts such as biology and sociology, where one might want to classify proteins or humans respectively, based on their interactions. Most methods assume interactions to be described by a graph, whose edges represent pairs of individuals known to interact. They then amount to graph clustering, with potentially distinct flavours: assortative communities see more interactions within than across communities, while the opposite holds in the disassortative case. 

The stochastic block model provides a versatile model of community structure, allowing representation of diverse scenarios and analytical comparison of candidate algorithmic detection procedures. In this model, nodes are partitioned into blocks,  and an edge is present between any two nodes with a probability depending only on the blocks to which each of the two nodes belong. Despite its simplicity, this model already displays rich behaviours, some of which are not yet fully understood. One phenomenon of practical interest consists in a phase transition from a situation where the graph of interactions does not reveal any structure, to one where it reflects some of the underlying structure. In the latter case, algorithmic procedures such as belief propagation can perform non-trivial classifications of  nodes. 

The simplest example of this situation consists in a model with $n$ nodes partitioned into two equal-size blocks, and where two nodes are connected with probability $a/n$ or $b/n$ depending on whether they belong to the same block or not. Then it is known that the Condition
\begin{equation}\label{cond1} 
(a-b)^2> 2(a+b)
\end{equation} 
is necessary for reconstruction, i.e. cluster in a way correlated with
the true partition. Mossel et al.~\cite{Mossel:2012ve} have indeed shown that, if it is violated, then the distribution of the observed graph is absolutely continuous with respect to that of an unstructured fully symmetric random graph without underlying block structure. When this condition holds, it is conjectured by Decelle et al.~\cite{Decelle:2011ve} that the underlying block structure can at least partially be recovered by belief propagation. Beyond their theoretical interest, such threshold phenomena also have some practical implications: they indicate what amount of downsampling or perturbation of original data can be tolerated before all useful information is lost. 

Three elements support the conjecture that under Condition (\ref{cond1}) community detection is possible. First, Decelle et al.~\cite{Decelle:2011ve} show that it implies {\em sensitivity of belief propagation to noise}. Second, it is known to correspond to a certain reconstruction threshold for a model of infinite random trees, whose structure locally resembles that of the stochastic block model. Third, numerical evaluations indicate the ability of belief propagation to retrieve some of the underlying structure under (\ref{cond1}).

In the present work, we initiate an investigation of similar phenomena in the more general context of {\em labelled} stochastic block models. In such models the observation of an interaction between any two individuals is enriched with a label which represents that interaction's particular type. Many applications of community detection naturally feature such labels. Protein-protein chemical reactions may be exothermic or endothermic; (movie-user) associations in collaborative filtering typically come with user ratings; email exchanges may be cold, formal, or familiar; etc.  


Our main contribution consists in a generalization of Condition (\ref{cond1}) describing the transition from unidentifiable to identifiable to the context of labelled stochastic block models. Specifically, after introducing necessary notation and our main conjecture in Section~\ref{sec:model}, we show in Section~\ref{sec:phtrBP} that our generalized condition corresponds to the transition between insensitivity to sensitivity in belief propagation. We then show in Section~\ref{sec:rec_tree} that it also coincides with the reconstruction threshold for the corresponding labelled tree model. The conjecture is further validated numerically in Section~\ref{sec:numerical} where belief propagation is shown to achieve useful detection only above the threshold. Conclusions are drawn in Section~\ref{sec:conclusion}.

\section{Related Work}%
\label{sec:related}
Several works address community detection in the un-labelled stochastic block model. The two main approaches are based on belief propagation and spectral methods. Spectral methods typically ensure consistent reconstruction in regimes with high ($\omega(1)$) average degree. An early reference is McSherry~\cite{McSherry:2001}. More recently Rohe et al.  \cite{Rohe:2011wq} use Laplacian spectra, and address growing numbers of communities, but still require high ($\omega(1)$) connectivity. Decelle et al. \cite{Decelle:2011ve} rely on belief propagation, and heuristically determine a threshold for detectability in a ``sparse'' regime, where node degrees are of order 1. 

The related problem of tree reconstruction has initially been
considered by Evans et al. \cite{ekps00}, who identified a
threshold on the tree's mean degree above which reconstruction is
feasible through a simple ``census'' method. This threshold was later
shown to correctly identify the onset of ``robust reconstruction'' by
Janson and Mossel \cite{janson:2004}.  We refer to \cite{mos04} for a
survey of this area.

A complete understanding of the relation between thresholds for community detectability in block models and reconstrruction in associated tree models is still missing. See, however, Gerschenfeld and Montanari~\cite{Gerschenfeld:2007} for conditions under which the two thresholds coincide. For the symmetric two-community case, Mossel et al.~\cite{Mossel:2012ve} show that the threshold for community detectability is at least as large as that for tree reconstruction; Coja-Oghlan~\cite{Coja-Oghlan:2010} determines an upper bound on the threshold for community detection, that is believed to be loose. 

In contrast, to the best of our knowledge the problem of community detection and tree reconstruction in the {\em labelled} case has not been explicitly considered in the literature.
\section{Model description and main conjecture}%
\label{sec:model}
In the sequel we focus on the simplest non-trivial labelled stochastic block model, which is defined as follows. A total of $n$ nodes are split into two equal-size blocks, namely block 0 and block 1. The type of any given node $i\in\{1,\ldots,n\}$ refers to the block it belongs to, and is denoted by $\sigma_i\in\{0,1\}$. Any two nodes $i$, $j$ are related with probability $a/n$ if $\sigma_i=\sigma_j$, and with probability $b/n$ otherwise, where $a$, $b$ are two positive constants. Furthermore, given any two related nodes $i$, $j$, one observes a label $L_{ij}$ taking its values in some finite set ${\mathcal{L}}$. Label $L_{ij}$ is drawn from distribution $\{\mu(\ell)\}_{\ell\in{\mathcal{L}}}$ if $\sigma_i=\sigma_j$, and from distribution $\{\nu(\ell)\}_{\ell\in{\mathcal{L}}}$ otherwise. 

Note that the present model generalizes the one studied in Mossel et
al.~\cite{Mossel:2012ve}, to which it reduces when the labels do not
bring extra information relative to the types of the underlying
nodes, that is when $\mu(\ell)\equiv\nu(\ell)$. 
In this context, we make the following conjecture:

{\bf Conjecture:} In the labelled stochastic block model with two symmetric blocks, connectivity parameters $a$, $b>0$ and label distributions $\mu$, $\nu$, reconstruction is infeasible if $\tau<1$, while it is feasible when $\tau>1$, where the threshold value $\tau$ is defined as
\begin{equation}\label{eq:tau}
\tau:=\lambda\sum_{\ell\in{\mathcal{L}}} \frac{a\mu(\ell)+b\nu(\ell)}{a+b}\left(\frac{a\mu(\ell)-b\nu(\ell)}{a\mu(\ell)+b\nu(\ell)}\right)^2,
\end{equation} 
and $\lambda:=(a+b)/2$ is the mean degree in the corresponding block model.

Note that this extends the conjecture made for the un-labelled case in \cite{Mossel:2012ve}, as the Condition $\tau>1$ simplifies to (\ref{cond1}) when $\mu(\ell)\equiv\nu(\ell)$. We will now establish several results supporting this conjecture.

\section{Phase transition for belief propagation sensitivity}%
\label{sec:phtrBP}

We first introduce a labelled tree which can be coupled with the
original graph, see Proposition 5.2 in \cite{Mossel:2012ve} (the only
difference here is the addition of labels on edges).
Consider the following random tree version of the reconstruction
problem. Starting from a root node $r$ with type $\sigma_r
\in\{0,1\}$, consider a branching process with the following
characteristics. Each node $i$ with type $\sigma_i$ gives birth to a
number of children of type $t=\sigma_i$ with Poisson distribution
$\hbox{Poi}(a/2)$ and to a number of children of type $t=1-\sigma_i$
with Poisson distribution $\hbox{Poi}(b/2)$. Conditional on the types
$(t,t')$ of a (parent-child) pair $(i,j)$, a label $L_{ij}$ is
attached to the edge $(i,j)$, drawn independently of everything else
with distribution $\mu$ if $t=t'$, and with distribution $\nu$ if $t\ne
t'$.

Consider now such a tree up to depth $d$, that we denote ${\mathcal{T}}_d$. For each node $i\in{\mathcal{T}}_d$, denote by ${\mathcal{T}}_d(i)$ the subtree rooted at node $i$, together with its labels. Let $X_i=\PP(\sigma_i=1|{\mathcal{T}}_d(i))$, and 
$$
R_i:=\frac{X_i}{1-X_{i}}\cdot
$$
Bayes formula entails that
$$
R_i=\prod_{j\hbox{ child of }i}\frac{X_j a \mu(L_{ij})+(1-X_j)b\nu(L_{ij})}{X_j b \nu(L_{ij})+(1-X_j)a\mu(L_{ij})}\cdot
$$
This readily reduces to a recursion in terms of the random variables $R_j$:
$$
R_i=\prod_{j\hbox{ child of }i}\frac{R_j a \mu(L_{ij})+b\nu(L_{ij})}{R_j b \nu(L_{ij})+a\mu(L_{ij})}\cdot
$$
It also follows at once from these expressions that if one starts from uniform beliefs ($X=1/2$ or equivalently $R=1$ on the leaves), then uniform beliefs constitute a fixed point.

Following Decelle et al.~\cite{Decelle:2011ve}, we introduce the following notion of robustness to noise for  this fixed point:
\begin{definition}Assume that belief ratios $R$ for leaf nodes at depth less than $d$ are fixed to 1. 
The belief ratio $R_r$ at root $r$ is then determined by induction from the belief ratios $R_j$ of nodes at depth $d$, i.e. $j\in{\partial\mathcal{T}_d}$, through a map $F_d$: $R_r=F_d(R_j,j\in{\partial\mathcal{T}_d})$.

The infinitesimal sensitivity $\chi(d)$ of the root belief $R_r$ to noise at depth $d$ is defined as 
\begin{equation}
\chi(d)=\lim_{\epsilon\to 0} \frac{1}{\epsilon^2}\hbox{Var}\left(F_d(1+\epsilon \xi_j, j\in{\partial\mathcal{T}_d})|{\mathcal{T}}_d\right),
\end{equation}
where the $\xi_j$ are i.i.d. unit variance random variables.
The fixed point $R\equiv 1$ is then said to be insensitive to noise if $\lim_{d\to\infty}\chi(d)=0$, and sensitive to noise if $\lim_{d\to\infty}\chi(d)=+\infty$.
\end{definition}

With these definitions at hand, we are ready to state the following
\begin{theorem}\label{th:BPsens}
Let $\tau$ be defined by expression (\ref{eq:tau}). Then the fixed point $R\equiv 1$ is insensitive to noise if $\tau<1$ and sensitive to noise if $\tau>1$.
\end{theorem}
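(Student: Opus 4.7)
The plan is to linearize the tree recursion around the fixed point $R\equiv 1$, derive an explicit expression for $\chi(d)$ as a sum of path products of edge weights, and then analyze that sum via the underlying branching process. Writing $R_j = 1 + \epsilon\xi_j$ for $j\in\partial\mathcal{T}_d$ and setting $g(R,L) := (Ra\mu(L)+b\nu(L))/(Rb\nu(L)+a\mu(L))$ for the single-edge map, the first calculation is to expand $g$ around $R=1$. Since $g(1,L)=1$, a short derivative computation gives
$$g(1+\epsilon\xi, L) \;=\; 1 + \epsilon\, w(L)\,\xi + O(\epsilon^2), \qquad w(L) := \frac{a\mu(L)-b\nu(L)}{a\mu(L)+b\nu(L)}.$$
Propagating this expansion through the product recursion by induction on depth yields $R_r = 1 + \epsilon\eta_r + O(\epsilon^2)$ with
$$\eta_r \;=\; \sum_{j\in\partial\mathcal{T}_d} W_j\,\xi_j, \qquad W_j := \prod_{e\in \mathrm{path}(r,j)} w(L_e).$$
Since the $\xi_j$ are i.i.d.\ with unit variance and the $W_j$ are $\mathcal{T}_d$-measurable, conditioning on $\mathcal{T}_d$ immediately gives the clean formula $\chi(d) = \sum_{j\in\partial\mathcal{T}_d} W_j^2$.

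\textbf{Mean computation and the subcritical case.} Next I would observe that $\chi(d)$ obeys the branching recursion $\chi(d) = \sum_{i\text{ child of }r} w(L_{ri})^2\,\chi^{(i)}(d-1)$ with i.i.d.\ subtree copies. A one-step expectation, using that each same-type child appears at rate $a/2$ with label drawn from $\mu$ and each opposite-type child at rate $b/2$ with label drawn from $\nu$, gives
$$\EE\!\left[\textstyle\sum_{i} w(L_{ri})^2\right] \;=\; \frac{a}{2}\sum_\ell \mu(\ell) w(\ell)^2 \;+\; \frac{b}{2}\sum_\ell \nu(\ell) w(\ell)^2 \;=\; \tau,$$
so induction yields $\EE[\chi(d)]=\tau^d$. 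When $\tau<1$, summability of $\tau^d$ combined with non-negativity gives $\sum_d \chi(d)<\infty$ almost surely, hence $\chi(d)\to 0$ almost surely, settling the insensitive regime.

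\textbf{Supercritical case and main obstacle.} For $\tau>1$, I would first note that $|w(\ell)|\leq 1$ forces $\tau\leq\lambda$, so $\lambda>1$ and the underlying labelled branching process survives with positive probability. The natural strategy is then to normalize $M_d := \chi(d)/\tau^d$, verify that $(M_d)$ is a non-negative martingale for the filtration generated by $\mathcal{T}_d$, and establish an $L^2$ bound of Kesten--Stigum type to conclude that $M_d\to M_\infty$ with $\EE M_\infty=1$ and $M_\infty>0$ almost surely on the survival event; combined with $\tau^d\to\infty$ this yields $\chi(d)\to\infty$. The $L^2$ bound reduces to computing the one-step conditional second moment of $M_1$ and checking that the resulting recursion for $\EE[M_d^2]$ stays bounded in $d$; the boundedness of $w$ and the finiteness of Poisson moments make each ingredient finite. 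I expect this non-degeneracy of the martingale limit under $\tau>1$ to be the only non-routine step, since the linearization and the mean computation are essentially bookkeeping once the edge weight $w(L)$ has been identified.
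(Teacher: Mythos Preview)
Your linearization is exactly the paper's: both arrive at $\chi(d)=\sum_{j\in\partial\mathcal{T}_d}\prod_{e\in\mathrm{path}(r,j)}w(L_e)^2$ with $w(\ell)=\frac{a\mu(\ell)-b\nu(\ell)}{a\mu(\ell)+b\nu(\ell)}$. From that point on the arguments diverge. The paper rewrites $|w(L_e)|=e^{W_e}$, invokes a branching-random-walk large-deviations result (its Theorem~\ref{theorem1}, based on Biggins) to control the empirical distribution of path sums $\sum_e W_e$, and then applies a Laplace argument to obtain the exact exponential rate $\frac{1}{d}\log\chi(d)\to\log\tau$ on survival; both directions of the theorem then follow from the sign of $\log\tau$. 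Your route is more elementary and self-contained: a first-moment bound $\EE[\chi(d)]=\tau^d$ plus Borel--Cantelli settles $\tau<1$, and for $\tau>1$ you normalize to the additive martingale $M_d=\chi(d)/\tau^d$ and argue $L^2$-boundedness. This avoids importing the large-deviations machinery at the cost of not producing the precise growth rate.

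Two points deserve sharpening in your write-up. First, ``each ingredient finite'' is not what makes the $L^2$ recursion bounded: the one-step variance recursion is $v_{d+1}=\rho\,v_d+v_1$ with $\rho=\lambda\,\EE[w(L)^4]/\tau^2$, and you need $\rho<1$, not merely $\rho<\infty$. The reason this holds is that $|w|\le 1$ forces $\EE[w^4]\le\EE[w^2]=\tau/\lambda$, whence $\rho\le 1/\tau<1$; you should state this inequality explicitly, since it is the crux. Second, $L^2$-boundedness gives $\EE[M_\infty]=1$ and hence $\PP(M_\infty>0)>0$, but upgrading to ``$M_\infty>0$ a.s.\ on survival'' requires the standard inherited-property $0$--$1$ argument (the event $\{M_\infty=0\}$ propagates down the tree, so its probability is a fixed point of the offspring p.g.f.). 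If some label has $w(\ell)=0$ one must pass to the thinned tree of edges with $w\ne 0$; note that its mean offspring $\lambda\,\PP(w(L)\ne 0)\ge\lambda\,\EE[w^2]=\tau>1$, so it is still supercritical. The paper's proof, working with $\log|w|$, implicitly sidesteps this degeneracy as well.
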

Before we prove the theorem, let us comment on the implications. 
As conjectured in Decelle et al. in the case of un-labelled data,
community detection is infeasible in an instance which is insensitive
to noise, while it is feasible (i.e. some reconstruction classifying
correctly more than half the nodes) in an instance that
is sensitive to noise. This leads us to state the conjecture in
Section \ref{sec:model}.

Before proving Theorem \ref{th:BPsens} we need a technical result.
Consider thus a branching process with Poisson offspring
distribution with mean $\lambda$ for some $\lambda>1$. In addition,
each parent-child edge in the corresponding branching tree is endowed
with a real weight. 
All weights $W$ are sampled in an i.i.d. fashion with moment
generating function: $\varphi(\theta) =\EE\left[e^{\theta
    W}\right]<\infty$.

We let $N(d)$ denote the number of descendants in the $d-$th
generation. We further let $N^+(d,s)$ (resp. $N^-(d,s)$) denote the number of such
descendants whose sum of weights along the path from the ancestor to
them is larger (resp. smaller) than $ds$. 

Let us now introduce the so-called rate function $h$ as follows. First, we let 
$$
h_0(x):=\sup_{y\in \RR}\left(xy-\log(\varphi(y))\right).
$$
This is the so-called Cram\'er transform of the weights distribution,
which by Cram\'er's theorem determines the behaviour of large
deviations of empirical means $(1/d)\sum_{t=1}^d W_t$  of i.i.d. weights from their expectation $\bar{w}:=\varphi'(0)$. Let now $w^-$ and $w^+$ be defined as 
$$
\left\{
\begin{array}{ll}
w^+&=\inf\{x\ge \bar{w}: h_0(x)\ge \log\lambda\},\\
w^-&=\sup\{x\le \bar{w}: h_0(x)\ge \log\lambda\}.
\end{array}
\right.
$$
We then let
\begin{equation}
h(x):=\left\{
\begin{array}{ll}
h_0(x)&\hbox{if }x\in[w^-,w^+],\\
+\infty&\hbox{otherwise.}
\end{array}
\right.
\end{equation}
We are now ready to state the following
\begin{theorem}\label{theorem1}
For any $x\ge \bar{w}$, $x\ne w^+$, on the event that the branching process survives indefinitely, one has the almost sure convergence
\begin{equation}\label{up}
\lim_{d\to\infty}\left(N^+(d,x)\right)^{1/d}=\lambda e^{-h(x)}.
\end{equation}
Similarly, for all $x\le \bar{w}$, $x\ne w^-$, on the event that the branching process survives indefinitely, one has
\begin{equation}\label{down}
\lim_{d\to\infty}\left(N^-(d,x)\right)^{1/d}=\lambda e^{-h(x)}.
\end{equation} 
\end{theorem}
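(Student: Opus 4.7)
The plan is to use the standard branching-random-walk toolkit: first-moment control and Borel--Cantelli for the upper bound, then a second-moment (Paley--Zygmund) argument combined with a tree-surgery bootstrap for the matching lower bound on the survival event. By the symmetry $W \mapsto -W$, which exchanges $w^\pm$ and the two statements (\ref{up})--(\ref{down}), it is enough to prove (\ref{up}).

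For the upper bound, independence of the offspring mechanism from the weights gives Wald's identity $\EE[N^+(d,x)] = \EE[N(d)] \cdot \PP(W_1+\cdots+W_d > dx) = \lambda^d \PP(\overline{W}_d > x)$, and the Chernoff bound yields $\PP(\overline{W}_d > x) \le e^{-d h_0(x)}$ for $x \ge \bar{w}$. Hence $\EE[N^+(d,x)] \le \lambda^d e^{-d h_0(x)}$. When $x > w^+$ we have $h_0(x) > \log \lambda$, so the bound is summable in $d$; since $N^+(d,x)$ is $\{0,1,\dots\}$-valued, Borel--Cantelli forces $N^+(d,x) = 0$ eventually, matching $\lambda e^{-h(x)} = 0$. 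For $\bar{w} \le x < w^+$, Markov at level $\lambda^d e^{-d(h_0(x) - \epsilon)}$ produces a probability at most $e^{-d\epsilon}$; a second use of Borel--Cantelli followed by $\epsilon \downarrow 0$ then delivers $\limsup_d N^+(d,x)^{1/d} \le \lambda e^{-h_0(x)}$ almost surely.

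For the lower bound, fix $\bar{w} \le x < w^+$ and a small $\delta > 0$, and let $Z_d$ count the depth-$d$ vertices whose path sum lies in the window $[dx, d(x+\delta)]$, so that $N^+(d,x) \ge Z_d$. A refined Cram\'er estimate at the minimizer $\theta^{\star} = h_0'(x) \ge 0$ gives $\EE[Z_d] = \lambda^d e^{-d h_0(x) + o(d)}$. The second moment is computed by summing over the depth $k$ of the most recent common ancestor of an ordered pair of depth-$d$ vertices: the two paths share the weight sum $S_k$ and then evolve with independent increments of length $d-k$. For $x > \bar{w}$ the tilt is genuinely non-trivial ($\theta^{\star} > 0$), the $k = 0$ term matches $\EE[Z_d]^2$ up to a constant, and the $k \ge 1$ contributions form a geometrically summable series; the boundary case $x = \bar{w}$ is handled by the usual local central limit theorem. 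Paley--Zygmund then yields a constant $c > 0$ with $\PP(Z_d \ge \tfrac12 \EE[Z_d]) \ge c$ for all large $d$.

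To upgrade this to an almost-sure lower bound on the survival event, fix a large generation $K$. Each of the $N(K)$ surviving subtrees rooted at depth $K$ is an independent copy of the whole process, so each has probability at least $c$ of hosting at least $\tfrac12 \EE[Z_{d-K}]$ window-vertices at its own depth $d-K$. Conditional on $N(K) = m$ with $m$ large, the strong law gives that a fraction at least $c/2$ of these subtrees succeed; by Kesten--Stigum, $N(K)/\lambda^K$ converges almost surely on survival to a strictly positive $W_\infty$, so there are indeed $\Theta(\lambda^K)$ subtrees to work with. Absorbing the fluctuation of the first $K$ path-sum increments into a small $\eta > 0$ yields $N^+(d, x - \eta) \ge (c/4)\, \lambda^K W_\infty \cdot \EE[Z_{d-K}]$. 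Taking $d$-th roots, sending $d \to \infty$, then $K \to \infty$, then $\eta \downarrow 0$ (using continuity of $h_0$ on $[\bar{w}, w^+)$) matches the upper bound. The main obstacle is the combination of the sharp second-moment computation with the subtree stitching: one must bound $\EE[Z_d^2]$ within a constant factor of $\EE[Z_d]^2$ uniformly in $d$, and then thread the independent subtree events together without sacrificing the $\lambda^K$ enhancement that the first $K$ levels are meant to provide.
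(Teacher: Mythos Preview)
Your upper bound is exactly the paper's argument: first moment plus Chernoff, then Borel--Cantelli (with the extra Markov step for $x\in[\bar w,w^+)$). For the lower bound on $[\bar w,w^+)$, however, the paper takes a much shorter route than you do: it invokes Biggins' results on branching random walks \cite{big92}. Specifically, the additive martingale $M^{(d)}(\theta)=(\lambda\varphi(\theta))^{-d}\int e^{\theta y}Z^{(d)}(dy)$ is known to have a strictly positive almost-sure limit on survival for $\theta$ in the relevant range, and Theorem~4 of \cite{big92} then yields directly that $(Z^{(d)}[xd-h,xd+h])^{1/d}\to\lambda e^{-h_0(x)}$; the matching upper half is a one-line exponential Chebyshev using the same martingale. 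Your second-moment/Paley--Zygmund computation followed by the depth-$K$ subtree bootstrap is essentially a from-scratch proof of that Biggins local statement, so it is correct in spirit but considerably longer. The trade-off is clear: the paper outsources the delicate part (uniform second-moment control, upgrading positive probability to almost sure on survival) to the literature, while your route is self-contained and makes explicit where the survival conditioning enters.

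One small slip in your stitching step: lower-bounding $N^+(d,x-\eta)$ is useless for $N^+(d,x)$, since the inequality goes the wrong way. The fix is to run the subtree Paley--Zygmund with the window shifted to $[(d-K)(x+\eta),(d-K)(x+\eta+\delta)]$, so that after adding the bounded first-$K$ increment you land above $dx$; then let $\eta\downarrow 0$ using right-continuity of $h_0$ on $[\bar w,w^+)$. With that correction your argument goes through.
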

\begin{proof}
We only prove (\ref{up}), as the other property (\ref{down}) is shown similarly. Consider first the case where $x>w^+$. The expectation of the summation in (\ref{up}) reads
$$
\EE N^+(d,x)= \lambda^d \PP\left(\sum_1^d W_t\ge x d\right).
$$
Chernoff's bound implies that this is no larger than
$e^{d(\log\lambda-h_0(x))}$. Being an integer-valued random variable,
the summation is then positive only with probability at most
$e^{d(\log\lambda-h_0(x))}$. By Borel-Cantelli's lemma, it is then
positive only for finitely many $d$'s. Thus the limit in (\ref{up}) is
0, as announced.

The case where $x\in[\bar{w},w^+)$ follows from a general result for
branching random walks \cite{big92}. Indeed consider the random
measure on $\mathbb{R}$:
\begin{eqnarray*}
Z^{(d)} = \sum_{i=1}^{N(d)} \delta(X_i),
\end{eqnarray*}
where $X_i$ is the sum of the weigths along the path from the ancestor
to the $i$-th individual in generation $d$. Note that we have $N^+(d,x)=Z^{(d)}[xd;\infty)$.

It is well-known that
\begin{eqnarray*}
M^{(d)}(x) := \left( \lambda \varphi(x)\right)^{-d}\int
e^{xy}Z^{(d)}(dy),
\end{eqnarray*}
is a positive martingale and hence has an almost sure limit
$M(x)$ as $d$ tends to infinity.
For $x\in (w^-,w^+)$, as shown in \cite{big92}, the limit $M(x)$ is stricly positive if
the process survives.
Then Theorem 4 in \cite{big92} implies that for any fixed $0<h$ as $d$
tends to infinity:
\begin{eqnarray*}
\left(Z^{(d)}[xd-h,xd+h]\right)^{1/d} \to \lambda e^{-h_o(x)}. 
\end{eqnarray*}
This clearly gives a lower bound to (\ref{up}). The upper bound is
easily obtained by the following argument:
\begin{eqnarray*}
N^+(d,x) = Z^{(d)}[xd,\infty) &\leq& \int e^{\theta(y-xd)}Z^{(d)}(dy)\\
&=& e^{-\theta xd}M^{(d)}(\theta) \lambda^d\varphi(\theta)^d,
\end{eqnarray*}
minimizing over $\theta<w^+$ (which ensures that $\lim_{d\to
  \infty}M^{(d)}(\theta)=M(\theta)>0$) gives the desired result.
\end{proof}

Let us now prove Theorem \ref{th:BPsens}. We first determine an expression for the infinitesimal sensitivity $\chi(d)$. Using linearization, we have that 
$$
\chi(d)=\sum_{j\in{\mathcal{F}}(d)}\prod_{(uv)\in\hbox{path }(j\sim r)}\left(\frac{\partial}{\partial R}\left. \frac{R a\mu(L_{uv})+b \nu(L_{uv})}{a\mu(L_{uv})+R b \nu(L_{uv})}\right|_{R=1}\right)^2.
$$
The derivative in the above formula reads
$$
\frac{\partial}{\partial R}\left. \frac{R a\mu(L_{uv})+b \nu(L_{uv})}{a\mu(L_{uv})+R b \nu(L_{uv})}\right|_{R=1}
=\frac{a\mu(L_{uv})-b\nu(L_{uv})}{a\mu(L_{uv})+b\nu(L_{uv})}\cdot.
$$
Let us denote the absolute value of this expression by $e^{W_{uv}}$ for some suitably defined weight $W_{uv}$, so that
$$
\chi(d)=\sum_{j\in{\mathcal{F}}(d)}\exp\left(\sum_{(uv)\in\hbox{path }(r\sim j)}2 W_{uv}    \right).
$$
Note that in the present model, thanks to symmetry between the two classes $0,1$, the labels $L_{uv}$ are i.i.d., with probability distribution $\PP(L=\ell)=\frac{a\mu(\ell)+b\nu(\ell)}{a+b}$.

We are thus in the setup of Theorem~\ref{theorem1}, with a dsitribution for the weights suitably derived from this label distribution and the transform $W=\log(|a\mu(L)-b\nu(L)|/(a\mu(L)+b\nu(L))$.

We then have, from Theorem~\ref{theorem1}, applying the Laplace method, the exponential equivalent:
\begin{equation}\label{logequ}
\frac{1}{d}\log\chi(d)\sim \log\lambda+\sup_{x\in\RR}(2x-h(x)).
\end{equation} 
Consider the modified expression $\sup_x(2x-h_0(x))$, and let $x^*$ denote the point attaining this supremum. By convexity of $h_0$ and the fact that it achieves its minimum at $\bar{w}$, necessarily $x^*\ge \bar{w}$. This supremum equals $\log\EE e^{2W}$ by convex duality. Note also that $x^*\le 0$, since the support of the distribution of $W$ is in $\RR^-$. Consider first the case where $\tau>1$, or equivalently,
$$
\log\lambda + \log\EE e^{2W}>0.
$$
We then have $h_0(x^*)=2x^*-\log\EE e^{2W}< \log\lambda$ by the above condition, so that $h_0(x^*)=h(x^*)$. Thus the logarithmic equivalent~(\ref{logequ}) reads $\log(\tau)$ and is strictly positive. We thus have sensitivity to perturbations.

Consider next the case where $\tau<1$, i.e. $\log\EE e^{2W}<-\log\lambda$. In that case, the logarithmic equivalent (\ref{logequ}) is upper-bounded by $\log(\tau)$ and is thus strictly negative. Insensitivity to perturbations follows.

\section{Phase transition for reconstructability on labelled trees}%
\label{sec:rec_tree}

In this section, $\mathcal{T}$ is an infinite tree with types
$\sigma\in \{0,1\}$ on its vertices and labels $L$ on its edges.
To have consistent notation with previous section, a child has the
same type as its parent with probability $\frac{a}{a+b}$.
Given that the child has the same type as its parent, its label is
distributed as $\mu(\ell)$, otherwise it is distributed according to
$\nu(\ell)$.
Note that if $\mathcal{T}$ is a realization of a Galton-Watson tree with offspring
distribution $\hbox{Poi}\left(\frac{a+b}{2}\right)$ conditioned on
non-extinction, we get exactly the same tree model as in the previous
section. 
In this section, the underlying tree is fixed (i.e. non-random) so
that the only randomness considered here is associated with the types
of the vertices and the labels of the edges.

We denote by $\PP_{0}$ and $\EE_{0}$ the
probability distribution and expectation conditional on the labels of
the edges of the tree.
We define the function $\epsilon:\mathcal{L}\to[0,1/2]$ by
\begin{eqnarray*}
\epsilon(\ell) = \frac{b\nu(\ell)}{a\mu(\ell)+b\nu(\ell)}.
\end{eqnarray*}a
If $j$ is a child of $i$, we have
\begin{eqnarray*}
\PP_{0}\left( \sigma_i\neq\sigma_j\right)= \epsilon(L_{ij}).
\end{eqnarray*}
We now give an alternative description of the random types of the
vertices of the tree when the labels of the edges are known,
i.e. conditionally on the labels. At the root $r$ of the tree ${\mathcal{T}}$ a
binary random variable is chosen uniformly at random. This type is then
propagated, with error, throughout the tree as follows: the child $j$
of the vertex $i$ receives the type of $i$ with probability
$1-\epsilon(L_{ij})$, and the opposite type with probability
$\epsilon(L_{ij})$. These events at the vertices are statistically
independent. This model has been studied in information theory,
mathematical genetics and statistical physics when the function
$\epsilon$ is constant. We refer to \cite{ekps00} for references.

Suppose we are given the types that arrived at the $d$-th level
$\partial\mathcal{T}_d$ of the tree $\mathcal{T}$. Observing the labels of the
edges and using optimal reconstruction strategy (maximum likelihood),
the probability of correctly reconstructing the original type at the
root is denoted by
$\left(1+\Delta(\mathcal{T},d)\right)/2$, where clearly
$\Delta(\mathcal{T},d)\geq 0$.

For an infinite tree $\mathcal{T}$, we denote by $\lambda=\lim\sup_d
|\partial\mathcal{T}_d|^{1/d}$ its growth rate. Note that our notation is
consistent with the previous section, as in the case where $\mathcal{T}$ is
a realization of a Galton-Watson tree with offspring
distribution $\hbox{Poi}\left(\frac{a+b}{2}\right)$, 
$\lambda=\frac{a+b}{2}$ a.s. We still define $\tau$ by the expression (\ref{eq:tau}).
Adapting the argument of \cite{ekps00}, we are able to show:
\begin{theorem}
Let $\mathcal{T}$ be an infinite labelled tree with root $r$ as
defined above.
Consider the problem of reconstructing the type of the root $\sigma_r$ from the types at the
$d$-th level $\partial\mathcal{T}_d$ of $\mathcal{T}$ and the labels on the
tree.
\begin{enumerate}
\item If $\tau>1$ then $\inf_{d\geq 1}\Delta(\mathcal{T},d)>0$;
\item 
If $\tau<1$ then $\inf_{d\geq 1}\Delta(\mathcal{T},d)=0$.
\end{enumerate}
\end{theorem}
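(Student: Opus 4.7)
The plan is to adapt the Evans--Kenyon--Peres--Schulman argument \cite{ekps00} to the labelled setting. Encode spins as $\sigma\in\{-1,+1\}$ and let $\theta(\ell):=1-2\epsilon(\ell)=(a\mu(\ell)-b\nu(\ell))/(a\mu(\ell)+b\nu(\ell))$, so that when $L$ has the edge-label distribution $(a\mu+b\nu)/(a+b)$ we have $\EE[\theta(L)^2]=\tau/\lambda$. Conditional on the labels, spins propagate as a Markov flip process: writing $\sigma_j=\sigma_r\prod_{e\in\hbox{path}(r,j)}\xi_e$ with independent $\xi_e\in\{\pm 1\}$ of mean $\theta(L_e)$, and using $\xi_e^2=1$ to cancel the edges shared by two root-to-leaf paths, one obtains $\EE_0[\sigma_r\sigma_j]=w_j$ and $\EE_0[\sigma_j\sigma_k]=w_jw_k/w_{u(j,k)}^2$, where $w_j:=\prod_{e\in\hbox{path}(r,j)}\theta(L_e)$ and $u(j,k)$ denotes the most recent common ancestor.

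For Part 1 ($\tau>1$), I would use the weighted census estimator $S_d:=\sum_{j\in\partial\mathcal{T}_d}w_j\sigma_j$, whose conditional first and second moments are $\EE_0[\sigma_r S_d]=W_d:=\sum_jw_j^2$ and $\EE_0[S_d^2]=V_d:=\sum_{j,k}(w_jw_k)^2/w_{u(j,k)}^2$. Writing $Z_r:=\EE_0[\sigma_r\mid\partial\mathcal{T}_d]$ and combining $L^2$-optimality of the posterior mean with $|Z_r|\leq 1$ gives $\Delta(\mathcal{T},d)\geq\EE[Z_r^2]\geq\EE[W_d]^2/\EE[V_d]$ after averaging over labels and using the linear-estimator bound on the unconditional MMSE. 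Splitting $V_d$ by MRCA depth $\ell$ yields $\EE[V_d]=\sum_\ell(\tau/\lambda)^{2d-\ell}P_\ell$, with $P_\ell\lesssim\lambda^{2d-\ell}$ the number of ordered leaf pairs with MRCA at depth $\ell$; the geometric sum converges since $\tau>1$, giving $\EE[V_d]=O(\tau^{2d})$, while $\EE[W_d]=N_d(\tau/\lambda)^d\asymp\tau^d$ along a subsequence $d_k$ with $N_{d_k}\asymp\lambda^{d_k}$. Hence $\Delta(\mathcal{T},d_k)\geq c>0$ along that subsequence, and since $\Delta(\mathcal{T},d)$ is nonincreasing in $d$ by the data-processing inequality, $\inf_d\Delta(\mathcal{T},d)>0$.

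For Part 2 ($\tau<1$), let $Z_i:=2\PP_0(\sigma_i=1\mid\partial\mathcal{T}_d(i))-1$ be the posterior magnetization at vertex $i$ in its subtree. The Bayes recursion of Section~\ref{sec:phtrBP}, rewritten for $Z$ rather than $R=(1+Z)/(1-Z)$, propagates $Z$ upward from $Z\equiv 0$ on the leaves and preserves $|Z|\leq 1$. Linearising at $Z=0$, a child $j$ contributes exactly $\theta(L_{ij})Z_j$, so $Z_r=\sum_j\theta(L_{rj})Z_j+O(Z_j^2)$; since the $Z_j$ are independent across children with mean zero, the linear part has expected square $\lambda\,\EE[\theta(L)^2]\,\EE[Z^2]=\tau\,\EE[Z^2]$ per generation. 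Following the EKPS template, one bounds the quadratic remainder using $|Z|\leq 1$ and iterates to obtain $\EE[Z_r^2]=O(\tau^d)\to 0$; as $\Delta(\mathcal{T},d)\leq\sqrt{\EE[Z_r^2]}$, this yields $\inf_d\Delta(\mathcal{T},d)=0$.

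The chief obstacle in Part 1 is passing from the ratio $\EE[W_d]^2/\EE[V_d]$ of expectations to the desired lower bound on a fixed tree whose growth is only controlled as a $\limsup$, which forces the subsequence restriction and, if one works conditionally on labels, a standard truncation of $w_j$ so that Paley--Zygmund applies to $W_d$. In Part 2, the delicate step is bounding the nonlinear remainder in the Bayes recursion uniformly in depth with label-dependent coefficients; the EKPS inductive $L^2$ estimate carries over once one has verified, as above, that the per-generation contraction factor equals exactly $\tau$.
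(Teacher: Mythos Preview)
Your approach is in the EKPS spirit but unpacks by hand things that the paper dispatches via black-box citations, and in doing so you run into precisely the obstacles you flag at the end. For Part~1 the paper does not use the census estimator and a second-moment computation; it quotes the EKPS electrical-network bound $\Delta(\mathcal{T},d)\ge 1/(1+\mathcal{R}_{\hbox{eff}}(r\leftrightarrow\partial\mathcal{T}_d))$ with edge resistances $R_{ij}=(1-\theta_{ij}^2)\Theta_j^{-2}$, and then invokes Lyons--Pemantle \cite{lp92} (Theorem~1(i)) on trees with i.i.d.\ random resistances to conclude that the effective resistance to infinity is a.s.\ finite as soon as $\lambda\,\min_{0\le x\le 1}\EE_0[\theta(L)^{2x}]=\lambda\,\EE_0[\theta(L)^{2}]=\tau>1$. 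This yields the almost-sure positivity of $\inf_d\Delta(\mathcal{T},d)$ in one stroke, with no subsequence restriction, no Paley--Zygmund truncation, and no need for your combinatorial estimate $P_\ell\lesssim\lambda^{2d-\ell}$, which in fact need not hold for a general tree whose level sizes are controlled only through a $\limsup$. Your route, as you yourself note, delivers at best a lower bound on $\EE[\Delta]$ along a subsequence; turning that into an a.s.\ statement on a fixed tree with i.i.d.\ labels is exactly what the Lyons--Pemantle result is designed to do.

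For Part~2 the paper is again shorter: it quotes the EKPS upper bound $\Delta(\mathcal{T},d)^2\le 2\sum_{i\in\partial\mathcal{T}_d}\Theta_i^2$, takes the label expectation to obtain $|\partial\mathcal{T}_d|\,(\tau/\lambda)^d\to 0$, and applies Fatou to get $\liminf_d\sum_i\Theta_i^2=0$ a.s. Your linearisation of the Bayes recursion is in principle a valid alternative, but the step you defer (``following the EKPS template, one bounds the quadratic remainder'') is the entire content of that argument; you have not actually carried it out, whereas the paper's one-line application of the ready-made bound is complete.
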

\begin{proof}
Following \cite{ekps00}, we derive a lower bound for
$\Delta(\mathcal{T},d)$ in terms of the effective
electrical conductance from the root $r$ to $\partial\mathcal{T}_d$ and an
upper bound which is the maximum flow from $r$ to $\partial\mathcal{T}_d$ for
certain edge capacities. We refer to \cite{lyons-book} for background
on these notions.

For the conductance lower bound, we follow Section 5 of \cite{ekps00}
and for each edge $(i,j)$, $j$ a child of $i$,  we define $\theta_{ij}=1-2\epsilon(L_{ij})=\frac{a\mu(L_{ij})-b\nu(L_{ij})}{a\mu(L_{ij})+b\nu(L_{ij})}$
and then assign the resistance
\begin{eqnarray*}
R_{ij} = (1-\theta_{ij}^2)\prod_{(uv)\in\hbox{path }(r\sim j)}\theta_{uv}^{-2},
\end{eqnarray*}
where $\hbox{path }(r\sim j)$ is the path from the root $r$ to
node $j$. We also define for each vertex $i$
\begin{eqnarray*}
\Theta_i = \prod_{(uv)\in\hbox{path }(r\sim i)}\theta_{uv}.
\end{eqnarray*}
By Theorem 1.2' and 1.3' of \cite{ekps00}, we have
\begin{eqnarray*}
\Delta(\mathcal{T},d)\geq
\frac{1}{1+\mathcal{R}_{\hbox{eff}}(r\leftrightarrow
  \partial\mathcal{T}_d)}&\mbox{ and }&
\Delta(\mathcal{T},d)^2\leq 2\sum_{i\in \partial\mathcal{T}_d}\Theta_i^2,
\end{eqnarray*}
where $\mathcal{R}_{\hbox{eff}}(r\leftrightarrow \partial\mathcal{T}_d)$ is
the effective resistance between the root $r$ and the $d$-th level of
the tree.

We first prove our second claim. Note that
\begin{eqnarray*}
\EE_0\left[\theta_{uv}^{2}\right] = \sum_{\ell} \frac{a\mu(\ell)+b\nu(\ell)}{a+b}\left(\frac{a\mu(\ell)-b\nu(\ell)}{a\mu(\ell)+b\nu(\ell)}\right)^2=\frac{\tau}{\lambda},
\end{eqnarray*}
so that we have for $\tau<1$,
\begin{eqnarray*}
\EE_0\left[ \sum_{i\in \partial\mathcal{T}_d}\Theta_i^2\right] =
|\partial\mathcal{T}_d|\left(\frac{\tau}{\lambda}\right)^{d}\rightarrow 0,
\end{eqnarray*}
as $d$ tends to infinity. Hence by Fatou's lemma, we have
\begin{eqnarray*}
\lim\inf_d\sum_{i\in \partial\mathcal{T}_d}\Theta_i^2 =0 \mbox{ a.s.},
\end{eqnarray*}
and our second claim holds.

Our first claim will hold, once we prove that for $\tau>1$, we have
$\mathcal{R}_{\hbox{eff}}(r\leftrightarrow \infty) =\sup_{d\geq 1}\mathcal{R}_{\hbox{eff}}(r\leftrightarrow
  \partial\mathcal{T}_d)<\infty$.
This fact follows indeed from a computation done in \cite{lp92}. Define the resistance $R'_{ij}=\prod_{(uv)\in\hbox{path }(r\sim
  j)}\theta_{uv}^{-2}$. Note that in our framework the labels of the
edges are i.i.d. with distribution
$\frac{a\mu(\ell)+b\nu(\ell)}{a+b}$. In particular the random variables
$\theta_{uv}$ are also i.i.d. and since $\theta_{uv}\leq 1$, we have
$\min_{0\leq x\leq
  1}\EE_0\left[\theta_{uv}^{2x}\right]=\EE_0\left[\theta_{uv}^{2}\right]$
so that by Theorem 1(i) of \cite{lp92}, for $\tau>1$, we have $\mathcal{R'}_{\hbox{eff}}(r\leftrightarrow
\infty)<\infty$ a.s. Since $R_{uv}\leq R_{uv}'$, we have by Rayleigh's
monotonicity law (see \cite{lyons-book}),
$\mathcal{R}_{\hbox{eff}}(r\leftrightarrow \infty) \leq \mathcal{R'}_{\hbox{eff}}(r\leftrightarrow
\infty)<\infty$ a.s.

\end{proof} 
\section{Numerical results}%
\label{sec:numerical}
\begin{figure}[htb]
\vspace{-0.5cm}\hspace{-1cm}        \begin{minipage}[t]{8cm}
\includegraphics[width=8cm]%
        {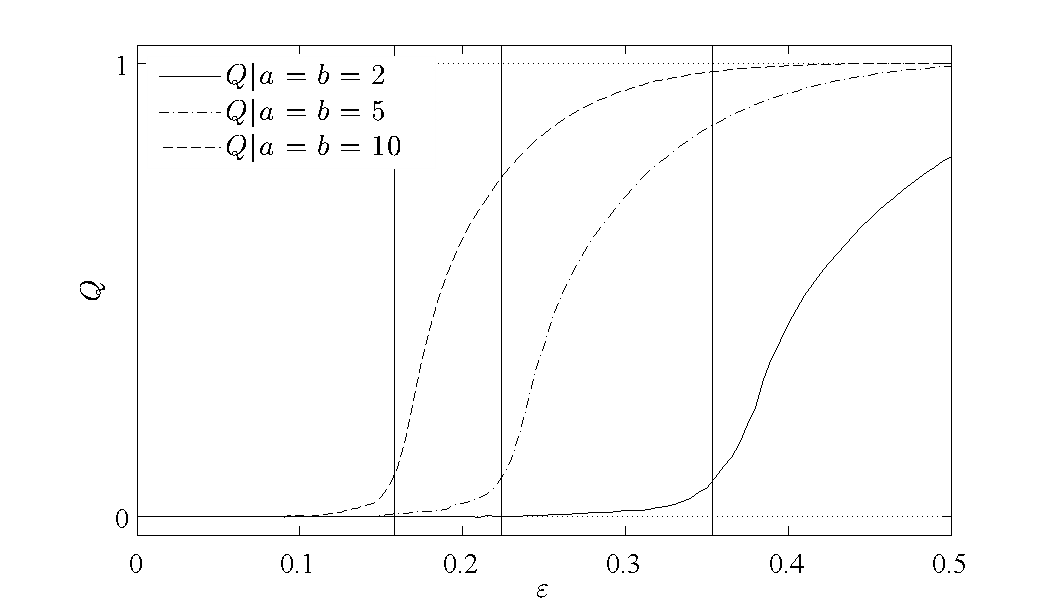}
\end{minipage}
\hfill
\begin{minipage}[t]{8cm}
        \includegraphics[width=8cm]%
        {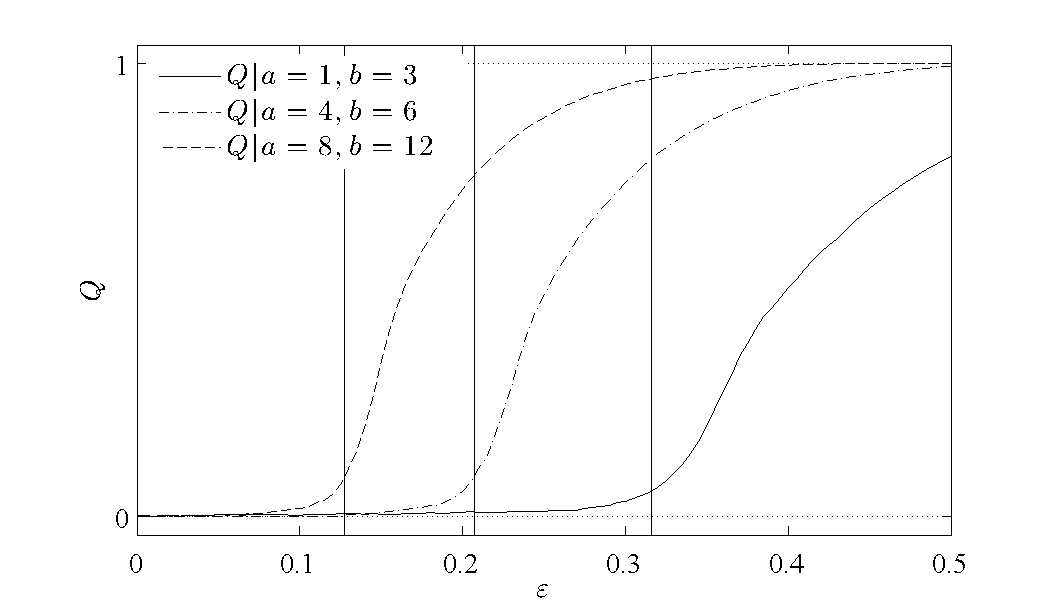}
\end{minipage}
    \caption{Overlap $Q$ as a function of the parameter $\varepsilon$
      (left: $a=b$; right $a<b$)}
    \label{fig:reconstruction-var-eps}
\end{figure}

We now investigate numerically the validity of our proposed conjecture.
We consider first a labelled stochastic block model with two symmetric blocks where the connectivity parameters are identical, i.e. $a=b$, so that community detection can only succeed based on the labels.
We assume for simplicity only two labels $+$ and $-$ and define the distributions $\mu(+) = p$ for edges among nodes of the same type and $\nu(+) = q$ for edges between nodes of different type for two parameters $p,q\in[0,1]$.
In this case, Condition (\ref{cond1}) does not hold, yet reconstruction may still be feasible depending on the values of $p$ and $q$.
In order to validate our conjecture that if the value $\tau$ given in \eqref{eq:tau} is greater than 1, reconstruction may be feasible, we parametrize $p=\frac{1}{2} + \varepsilon$ and $q=\frac{1}{2} - \varepsilon$, which leads to the simplified condition for reconstruction:
\begin{equation}
    \varepsilon > \frac{1}{2\sqrt{a}}.
   \label{eq:thresh-twolabel}
\end{equation}

We characterize the success of the reconstruction using the
\emph{overlap} metric introduced by Decelle et al. in equation (5) of
\cite{Decelle:2011ve}, which we repeat below:
\begin{equation}
    Q(\{\sigma_i\},\{\hat{\sigma}_i\}) = \max_{\pi} \frac{\frac{1}{n} \sum_i \delta_{\sigma_i,\pi(\hat{\sigma}_i)} - \max_{t} n_t}
    {1 - \max_a n_a},
\end{equation}
where $\sigma_i$ denotes the original assignment of types to nodes $i=1\dotsc n$, $\hat{\sigma}_i$,  denotes the estimated assignment, $t$ denotes communities, and $n_t$ is the size of community $t$. In our setup, $t=0$ or 1 and $n_t=n/2$.
Since types may be assigned in different order in the estimate, we
vary over all permutations $\pi(\hat{\sigma}_i)$ of $\hat{\sigma}_i$
and take the one with maximum overlap.
This overlap metric ranges from 0 to 1, equating zero when classification is no better than assigning all nodes to a fixed class (or equivalently, assigning nodes to a randomly chosen type).
We generate a labelled stochastic block model graph with the parameters given above and $n= 5000$ nodes. Then, we use the standard sum--product belief propagation algorithm to infer the types of the nodes based on the labels.
We vary both the density, i.e. $a=b$, and $\varepsilon$. All plotted values are averages over several different seeds.

In Fig.~\ref{fig:reconstruction-var-eps} (left), we plot the overlap metric $Q$ against  $\varepsilon$ on the $x$-axis for $a=b$ given by 2, 5, 10.
For each curve, we indicate the threshold \eqref{eq:thresh-twolabel} as a vertical line in the same style as the corresponding curve.
We observe that to the left of the threshold, $Q$ remains around zero and the variation may be attributed to the initial conditions and small-scale effects.
To the right of the threshold, however, $Q$ increases steadily.

For comparison, in Fig.~\ref{fig:reconstruction-var-eps} (right), we provide the same metric but with $a<b$ given by $(a,b) = (1,3),(4,6),(8,12)$. Accordingly, belief propagation can now exploit both edges as well as their labels and the corresponding curves are shifted towards the left, along with the threshold of $\varepsilon$ where $\tau=1$, again indicated by a vertical line for each curve.

It is interesting, that even for reasonably small scales, belief propagation consistently fails below the threshold, with overlap close to zero, yet achieves positive overlap above the threshold.

%
 
\section{Concluding remarks}%
\label{sec:conclusion}
We have initiated an analysis of community detection in the context of labelled interactions. We have formulated a conjecture on when detectability is feasible, in the form of Condition (\ref{eq:tau}). While restricted to the two symmetric communities case, this condition is already useful in determining how the availability of labels affects detectability. 
A natural extension will consider richer scenarios with more communities, where our techniques can potentially characterize the corresponding transition thresholds. 
On the theoretical front, we have established that two phase transitions, namely sensitivity of belief propagation, and tree reconstructability, coincide in the case of labelled trees. The main outstanding question there is to validate our conjecture that these thresholds characterize the onset of community detectability.



\bibliographystyle{plain}


\bibliography{detect}

\end{document}